\documentclass[runningheads]{llncs}

\usepackage[T1]{fontenc}

\usepackage{graphicx}

\usepackage{amsmath,amsfonts,bm}

\def\eqref#1{equation~\ref{#1}}

\def\1{\bm{1}}

\DeclareMathAlphabet{\mathsfit}{\encodingdefault}{\sfdefault}{m}{sl}
\SetMathAlphabet{\mathsfit}{bold}{\encodingdefault}{\sfdefault}{bx}{n}

\usepackage{hyperref}
\usepackage{url}
\usepackage{enumitem}
\usepackage{dsfont}
\usepackage{graphicx}

\begin{document}

\title{Asymptotic Analysis of Trading Fees in CFMM}

\author{Peiyang Jin \and Clouds \and Jing Qian}
\institute{
\email{jpythu@gmail.com\\clouds.flowing@gmail.com\\qianjingnju@gmail.com} }

\maketitle

\begin{abstract}
\label{abs}
As the dominant trading mechanism in decentralized finance, Automated Market Maker has been widely studied in research. However, limited research has been done with the trading fees taken into consideration.
In this work, we study how much trading fee Liquidity Providers(LPs) can receive from arbitrage trading when the fee rate approaches zero. We give a closed-form formula for it and our result shows that the trading fees generated by arbitrage trading can fully offset the LVR loss when the price process is continuous. We further extend our conclusions to price processes with jumps and the theoretical analysis shows that jumps are the only cause of LP loss apart from market risks. Our results provide practical guidance for AMM designers as well as LPs.
\end{abstract}

\section{Introduction and Related Work}
\label{sec:intro}
In recent years, Decentralized finance (DeFi) has gained popularity in financial markets due to blockchain technology, which facilitates the trading of assets in a non-custodial manner. At the time of writing, the total value locked (TVL) in DeFi has exceeded \$100 billion. Currently, the dominant trading mechanism in DeFi is the Automated Market Maker (AMM), which is commonly deployed as a Constant Function Market Maker (CFMM). The most famous examples of CFMMs include Uniswap, Curve, and Balancer.   
Unlike traditional market makers who use order books to execute trades, CFMMs rely on mathematical functions to determine prices and liquidity. In CFMMs, Liquidity Providers (LPs) deposit assets into the pool in exchange for a share of trading fees generated by the trades in the pool, and a trade is executed only if it preserves some function of reserves, known as ''the invariant''.

With the rapid development of CFMMs, an extensive body of work has focused on analyzing the Liquidity Provider (LP) returns. \cite{evans2020liquidity} derives explicit payoff
and pricing functions for LP shares in Geometric mean market makers (G3Ms) with both static and dynamic weights.  \cite{milionis2022automated} quantifies general Constant Function Market Maker (CFMM) LP profits by Loss-Versus-Rebalancing (LVR). LVR evaluates the discrepancy between the returns of an CFMM LP position and the returns of the replicating rebalancing trading strategy, which maintains an exposure identical to the CFMM position but executes all rebalancing trades at centralized exchange (CEX) prices. \cite{fukasawa2023weighted} analyze the connections between LP Impermanent Loss in CFMMs and variance swaps, as well as gamma swaps. 

However, a question left open by the above mentioned work concerns fees. Both \cite{evans2020liquidity} and \cite{milionis2022automated} assume arbitrageurs do not pay trading fees, which is not the case in practice. In this work, we fill the gap by introducing trading fees into the model. More related to this work are \cite{evans2021optimal} and \cite{milionis2024automated}. \cite{evans2021optimal} analyze optimal fees for G3Ms and show that LP loss is minimized as fees approach zero under the assumption of a quadratic tracking error. Instead of focusing on G3Ms, this article analyzes a broader family of AMMs, CFMMs, which include G3Ms as a subclass. Besides, this paper relaxes the geometric Brownian motion assumption to a semi-martingale process. \cite{milionis2024automated} present a quantification of LP loss taking trading fees and discrete Poisson block generation times into consideration. More specifically, they focus on the expected instantaneous rate of arbitrage profit and derive a semi-closed-form expression for it. Unlike \cite{milionis2024automated}, this paper directly quantifies LP P\&L (profit and loss) by calculating the trading fees received by LP and the LP loss incurred by arbitrage without the assumption of a discrete stochastic process. 

To conclude, the main contributions of this work include:
\begin{itemize}
    \item This work presents a more accurate model for LP returns, which takes trading fees into consideration.
    \item The price process in this work is extended to a general semi-martingale, which is more generalized than previous work.
    \item This model shows that the trading fee received by LP is not zero when the fee rate approximates zero, and we derive a closed formula for it.
    \item Our results naturally extend to price processes with jumps and we evaluate the impact of price jumps on trading fee and LP's PnL. 
\end{itemize}

\section{Assumptions and Notation}
\label{sec:assumption}
An AMM is a pool of a trading pair between two assets. Without loss of generality, we assume asset A is the numeraire and asset B is a risky asset. The reserves of A and B at time $t$ are $x(t)$ and $y(t)$ respectively. For brevity, we use $x$ and $y$ to represent $x(t)$ and $y(t)$ in the following sections.

A trade against the AMM pool means that the AMM pool receives the input asset $\Delta x$ (or $\Delta y$) and outputs a controlled amount of the other asset $\Delta y$ (or $\Delta x$). Here $\Delta x,\Delta y\geq0$.
In this work, a fixed portion $\gamma$ of the input asset is collected as the trading fee and it is assumed to be a separate cash flow to LPs. More specifically, for a transaction exchanging $\Delta x$ asset A to $\Delta y$ asset B, the pool reserve changes from $(x,y)$ to $(x + (1 - \gamma) \Delta x,y - \Delta y)$ and the trading fee would be $\gamma\Delta x$ asset A. 

For a transaction exchanging $\Delta y$ asset B to $\Delta x$ asset A, the pool reserve changes from $(x,y)$ to $(x - \Delta x,y + (1 - \gamma) \Delta y)$ and the trading fee is $\gamma\Delta y$ amount of asset B. Therefore, the value of the fee changes along with the price of asset B. To simplify the calculation, a reasonable choice is to use the average price of asset B during each transaction for the calculation. Under this assumption, the value of the collected fee is  $\frac{\gamma}{1-\gamma}\Delta x$ when the input asset is $\Delta y$ asset B. 

Let $TV(f,[a,b])$ be the total variation of a function $f$ on interval $[a,b]$. The cumulative fee for a period of time $[0,T]$ is a portion of the total variation of $x$: 
\begin{equation}
\label{eq:fee}
\begin{split}
    fee(T)& =\frac{\gamma}{1-\gamma} TV(x,[0,T])\\
    & =\frac{\gamma}{1-\gamma}\sup_{\|\pi\| \downarrow0}\sup_\pi \sum_{i=1}^{n-1} |x(t_{i+1})-x(t_{i})| \\
    &(\pi:= 0\leq t_1 < t_2<...<t_n\leq T;\|\pi\|:=\max_i( t_{i+1}-t_i))
\end{split}
\end{equation} 

CFMM is a specific class of AMM where the pool reserves before and after each trade follow the equation $F(x,y)=F(x', y')$, where $F$ is a pre-defined function of a CFMM. 
In this work, we focus mainly on the CFMMs where $F\in C^2(\mathbb{R \times R})$, in which case there exists and only exists one continuous non-negative orderbook density function $g$ such that $x$ and $y$ have the representation: 
$x=\int _0 ^P g(p)\cdot pdp$, $y=\int _P ^\infty g(p)dp$. Here $P$ has the practical meaning pool price.

Denoting the price of the risky asset in the pool as $P_{\gamma}$ and the value of the pool $V=x+P_{\gamma}y$, we have the following properties:

\begin{enumerate}[label=\roman*.]
    \item $x$ and $y$ are two monotonic functions of $P_{\gamma}$
    \item $P_{\gamma}=-\frac{\mathrm{d}x}{\mathrm{d}y}$
    \item $V$ is a twice differentiable function of $P_{\gamma}$  
    \item $\frac{\mathrm{d}V}{\mathrm{d}P_{\gamma}}=y$
    \item $-\frac{\mathrm{d}^2V}{\mathrm{d}P_{\gamma}^2}=g$
\end{enumerate}
\begin{figure}
    \centering
    \includegraphics[width=0.7\linewidth]{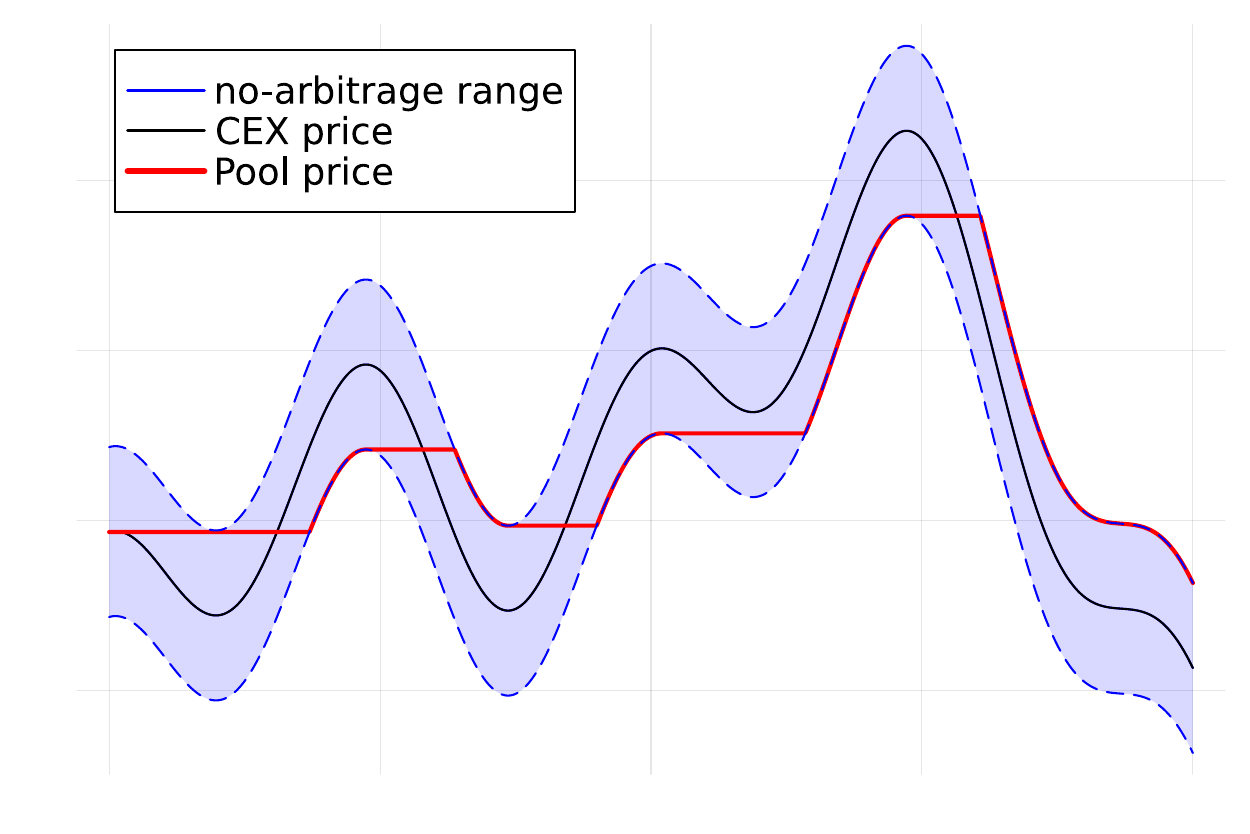}
    \caption{An illustration of CEX price change and the AMM pool price change when trading fees are present. The x axis is time. The y axis is the price of the risky asset.}
    \label{fig:price}
\end{figure}
Due to the passive nature of CFMM liquidity provision, CFMMs lags the CEX market in price change, resulting in opportunities of arbitrage. Arbitrageurs continuously monitor the price difference between the pool and the CEX and trade as long as the gap between the pool price and the CEX price becomes larger than the trading fee. These trades push the pool price close to the CEX price until the pool price reaches a ``no-arbitrage interval'', where the trading fees are larger than the profits that arbitrageurs can get by exploiting the price deviation. Figure \ref{fig:price} illustrates how CEX price and AMM pool price change over time when trading fees are taking into consideration. In this work, we assume that there are only arbitrageurs trading against the pool. In other words, all the price changes in the CFMM pool result from arbitrage activities and there are no noise trades. 

Let CEX price be a process $\{P_t\}$ and $\{P_\gamma(t)\}$ be the pool price process (respectively written as $P$ and $P_\gamma$ for brevity in the following sections). The arbitrage opportunity exists when $(1-\gamma)P_\gamma -P>0$ or $(1-\gamma)P -P_\gamma>0$.
This indicates that the pool price $P_\gamma$ changes only when $|\log P-\log P_\gamma|>-\log(1-\gamma)$. Since we assume that arbitrageurs immediately act on these opportunities and push the pool price back into the no-arbitrage interval, the pool price is a continuous adaptive process $Q$ with minimal total variation that satisfies $|\log P -\log Q|\leq-\log(1-\gamma)$. In other words, $P_\gamma$ is defined as a solution $Q$, which has
\begin{equation}
\label{eq:2}
    \inf\{TV(\log Q, [0,T]):\|\log P-\log Q\|_\infty\leq \frac{u}{2}\} ,
\end{equation}
where $u=-2\log(1-\gamma)$. 

\section{Asymptotic Analysis of Fee}
\label{sec:main}

From \cite{lochowski2013truncated}, for a càdlàg function $f$ and $c \in \mathbb{R^+}$, the truncated variation $TV^c$ of $f$ on interval $[a,b]$ is defined as

\[TV^c (f,[a,b])=\sup_n \sup_{a\leq t_1 < t_2<...<t_n\leq b} \sum_{i=1}^{n-1} \max\{|f(t_{i+1})-f(t_i)|-c,0\} ,\] 
and it has the following property:

\[TV^c (f,[a,b])= \inf_{\|g-f\|_\infty \leq \frac{c}{2}} TV(g,[a,b])\]
Applying this property to $\{\log P_t\}$ and combining with equation \ref{eq:2} lead to the following proposition.

\begin{proposition}
\label{prop:1}
\[ TV(\log P_{\gamma} ,[a,b])=TV^u(\log P,[a,b])\]

\end{proposition}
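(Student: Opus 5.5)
The plan is to read Proposition~\ref{prop:1} as a direct combination of the variational characterization of truncated variation from \cite{lochowski2013truncated} with the defining property \eqref{eq:2} of the pool price. By definition, $\log P_\gamma$ is a minimizer in \eqref{eq:2}. It is a function $g=\log Q$ satisfying $\|g-\log P\|_\infty\le u/2$ on $[a,b]$, and among all such admissible functions it has the smallest total variation. On the other hand, the cited property with $f=\log P$ and $c=u$ gives
\[TV^u(\log P,[a,b])=\inf_{\|g-\log P\|_\infty\le \frac{u}{2}} TV(g,[a,b]).\]
So the proof consists of checking that the infimum in \eqref{eq:2} and the infimum in the cited property are taken over the same class of functions and are attained by $\log P_\gamma$.

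The argument runs in three steps. First, note that $\log P$ is càdlàg, since $P$ is a positive semimartingale, so the cited result applies. Second, the inequality $TV(\log P_\gamma,[a,b])\ge TV^u(\log P,[a,b])$ is immediate, because $\log P_\gamma$ is admissible: it satisfies $|\log P-\log P_\gamma|\le -\log(1-\gamma)=u/2$. Third, for the reverse inequality I would use the explicit minimizer from \cite{lochowski2013truncated}. Its construction is a ``lazy'' tracking function: it stays constant while $f$ remains inside the band of width $c$ around it, and moves only when $f$ reaches the band boundary, so that it sits exactly at distance $c/2$ from $f$. This is precisely the arbitrageur dynamics described above, in which the pool price moves only when $|\log P-\log P_\gamma|$ would exceed $u/2$, and then only just enough to return to the boundary. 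I would therefore identify $\log P_\gamma$ with this lazy minimizer, which attains $TV^u(\log P,[a,b])$.

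The main obstacle is a mismatch between the two classes of functions and between the two time intervals. First, \eqref{eq:2} restricts to continuous adapted $Q$, while the infimum in the cited property runs over all càdlàg $g$. When $P$ jumps, the lazy minimizer also jumps, so a continuous $Q$ could have strictly larger variation. I would handle this by interpreting the pool price as càdlàg, with instantaneous arbitrage after a jump, and by noting that the lazy minimizer is adapted because it depends only on the past of $P$. Second, the infimum in \eqref{eq:2} is over $[0,T]$ while the proposition is stated on $[a,b]$. A global minimizer need not minimize on every subinterval, because its starting value at time $a$ is fixed by the past. I would resolve this by showing that the lazy minimizer's variation on $[a,b]$ equals $TV^u(\log P,[a,b])$ up to an initial-position term. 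That term vanishes once the process has touched the band boundary, and Lochowski's results give this kind of time-consistency. Failing that, I would state the equality on $[0,T]$ with a suitable initial condition $P_\gamma(0)=P_0$.
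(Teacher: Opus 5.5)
Your opening reduction is exactly the paper's argument. The paper does nothing more than combine {\L}ochowski's identity $TV^c(f,[a,b])=\inf_{\|g-f\|_\infty\le c/2}TV(g,[a,b])$ with the definition of $P_\gamma$ as a minimizer in equation~\ref{eq:2}, and your lower bound by admissibility is valid on every $[a,b]$. The gap is in your third step and in the repairs of your last paragraph, which rest on false claims.

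{\L}ochowski's exact minimizer is not adapted, and no adapted process can be one. Take $f(t)=t$ on $[0,1]$ with $u<1$. Every admissible $g$ has $TV(g,[0,1])\ge(1-u/2)-g(0)$, so attaining $TV^u(f,[0,1])=1-u$ forces $g(0)=u/2$. For $f(t)=-t$ it forces $g(0)=-u/2$. The starting value must anticipate the direction of the first $u$-move. Hence the arbitrage (lazy) tracker, which is adapted, does not attain $TV^u$ in general. Your fallback $P_\gamma(0)=P_0$ fails on the same example, giving $TV=1-u/2$.

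Nor does the initial-position error vanish once the band boundary is reached. For $f(t)=t$ on $[0,2]$, the optimal $g$ sits on the lower boundary at $t=1$, yet $TV(g,[1,2])=1>1-u=TV^u(f,[1,2])$. This is structural: $TV$ is additive over adjacent intervals, while $TV^u$ is only superadditive (here $TV^u(f,[0,2])=2-u>2-2u$). So no single function satisfies the identity on all subintervals at once.

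What is true, and all that Proposition~\ref{prop:2} and Theorem~\ref{thm:1} use, is a two-sided estimate. If $\log P_\gamma$ is the lazy tracker of $\log P$ started anywhere in the band, then for every $[a,b]$
\[TV^u(\log P,[a,b])\le TV(\log P_\gamma,[a,b])\le TV^u(\log P,[a,b])+u.\]
The lower bound is your admissibility argument. For the upper bound, compare with {\L}ochowski's minimizer for $[a,b]$. From the first time $\log P$ rises more than $u$ above its running minimum on $[a,\cdot]$, or falls more than $u$ below its running maximum, the two functions coincide, since both then follow the same lazy dynamics from the same point. Before that time $\log P$ stays in a window of width at most $u$, and one checks that the tracker's extra variation is at most $u$. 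After multiplying by $u$ the discrepancy is $O(u^2)$, so the limit in Proposition~\ref{prop:2} is unchanged.

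Exact equality holds only under the paper's definitional reading, where $P_\gamma$ is the pathwise, generally non-adapted, minimizer of equation~\ref{eq:2} on the one interval considered. You should either adopt that reading explicitly or replace the equality by the estimate above. Your remark about continuous versus c\`adl\`ag competitors is fair, but it is harmless for continuous $P$, where both the tracker and {\L}ochowski's minimizer are continuous.
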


From \cite{lochowski2013truncated},  if $\{X_t\}$ is a continuous semi-martingale, $\lim_{c \downarrow 0} cTV^c (X,[0,t]) = <X>_t \ a.s.$, and the convergence is uniform for $\forall t \in [0,T]$. Combining it with proposition~\ref{prop:1}, we have
\begin{proposition}
\label{prop:2}
If $\{P_t\}$ is a continuous semi-martingale,
\[ \lim_{u \downarrow 0} uTV (\log P_{\gamma},[0,t]) = <\log P>_t \quad \ a.s.\]
The convergence is uniform for $\forall t \in [0,T]$.
\end{proposition}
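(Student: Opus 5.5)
The plan is to combine Proposition~\ref{prop:1} with the truncated-variation limit theorem of \cite{lochowski2013truncated}, applied to $X_t=\log P_t$. First, since $P$ is a strictly positive continuous semi-martingale, It\^o's formula with $\log$ shows that $X=\log P$ is also a continuous semi-martingale. Its quadratic variation is $\langle X\rangle_t=\langle \log P\rangle_t$. This is exactly the hypothesis of the cited result, so almost surely
\[
\lim_{c\downarrow 0} \sup_{t\in[0,T]} \bigl| c\,TV^c(\log P,[0,t]) - \langle \log P\rangle_t \bigr| = 0 .
\]

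Next I use Proposition~\ref{prop:1} on each interval $[0,t]$. It gives $TV(\log P_\gamma,[0,t]) = TV^u(\log P,[0,t])$ with $u=-2\log(1-\gamma)$. The map $\gamma\mapsto u$ is continuous and strictly increasing, and $u\downarrow 0$ exactly when $\gamma\downarrow 0$, so the limit in $u$ is well posed. Multiplying by $u$ gives $u\,TV(\log P_\gamma,[0,t]) = u\,TV^u(\log P,[0,t])$ for every $t\in[0,T]$ at once. Setting $c=u$ in the display above then yields both the almost sure limit and its uniformity in $t$.

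One point needs care. Proposition~\ref{prop:1} is applied with $P_\gamma$ defined on the whole of $[0,T]$, namely as a minimizer in \eqref{eq:2}. Its restriction to $[0,t]$ must still realize $TV^u$ on $[0,t]$. I would justify this from the construction in \cite{lochowski2013truncated}, where the optimal $Q$ is built causally (the lazy or "play" operator that moves only when the band constraint binds). Such a $Q$ is simultaneously optimal on every initial segment $[0,t]$. This matches the paper's description of $P_\gamma$ as an adapted process that moves only when arbitrage is profitable.

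I expect the main obstacle to be this consistency of the minimizer across all $t$ for a single pathwise $P_\gamma$, rather than the limit itself, which is quoted. I would handle it by identifying $P_\gamma$ with the play-operator solution. For that solution, $TV(\log P_\gamma,[0,t])=TV^u(\log P,[0,t])$ holds for each $t$, up to the initial-condition convention at time $0$. I would then note that the almost sure null set from the cited theorem does not depend on $t$ or $u$, so the uniform convergence transfers directly.
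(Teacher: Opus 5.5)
Your proposal is correct and follows the paper's route. The paper invokes the uniform almost-sure limit $c\,TV^c(X,[0,t])\to\langle X\rangle_t$ from \cite{lochowski2013truncated} for $X=\log P$ with $c=u$, then transfers it to $P_\gamma$ through Proposition~\ref{prop:1}, as you do. Your concern about whether one minimizer realizes $TV^u(\log P,[0,t])$ on every initial segment is a point the paper passes over. Your phrase ``up to the initial-condition convention'' can be made precise: two play outputs driven by $\log P$ with starting values in the band have variations on $[0,t]$ differing by at most their initial gap, hence by at most $u$. Comparing with the optimal function of \cite{lochowski2013truncated} (itself such an output), the error after multiplying by $u$ is $O(u^2)$ uniformly in $t$.
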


\begin{theorem}
For a continuous semi-martingale process of CEX price $\{P_t\}$, 
\[\lim_{\gamma \to 0} fee(t) = \int_{0}^{t} \frac{g(P)}{2} d<P> \quad a.s.\] The convergence is uniform for $\forall t \in [0,T]$.
\label{thm:1}
\end{theorem}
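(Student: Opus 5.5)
The plan is to write the fee as an integral of a continuous integrand against the measure $u\,|d\log P_\gamma|$. By Proposition~\ref{prop:2}, that measure converges to $d\langle \log P\rangle$, and It\^o's rule converts the limit into the $d\langle P\rangle$ form.

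\textbf{Step 1 (rewrite the fee).} Since $x=\int_0^{P_\gamma} g(p)\,p\,dp$ with $g\ge 0$, $x$ is a monotone $C^1$ function of $P_\gamma$. Moreover $P_\gamma$ is continuous and of finite variation: by Proposition~\ref{prop:1}, $TV(\log P_\gamma,[0,T])=TV^u(\log P,[0,T])<\infty$. Hence the total variation of $x$ is a Stieltjes integral,
\[
TV(x,[0,t])=\int_0^t g(P_\gamma(s))\,P_\gamma(s)\,|dP_\gamma(s)|=\int_0^t h_\gamma(s)\,|d\log P_\gamma(s)|,
\]
where $h_\gamma(s):=g(P_\gamma(s))P_\gamma(s)^2$. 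Write $\mu_u$ for the measure $u\,|d\log P_\gamma|$ on $[0,T]$. With $u=-2\log(1-\gamma)$ we have $\frac{\gamma}{(1-\gamma)u}\to\frac12$ as $\gamma\to0$. Therefore
\[
fee(t)=\frac{\gamma}{(1-\gamma)u}\int_0^t h_\gamma\,d\mu_u,
\]
and it suffices to show that $\int_0^t h_\gamma\,d\mu_u\to\int_0^t h\,d\langle\log P\rangle$ uniformly in $t$, where $h(s):=g(P_s)P_s^2$.

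\textbf{Step 2 (convergence of the integrand).} The constraint $\|\log P-\log P_\gamma\|_\infty\le u/2$ gives $P_\gamma\to P$ uniformly on $[0,T]$. The path of $P$ is continuous and positive (as implicitly assumed by using $\log P$), so it ranges in a compact subset of $(0,\infty)$. On a slightly larger compact set, $p\mapsto g(p)p^2$ is uniformly continuous. Hence $\|h_\gamma-h\|_\infty\to0$, and $h$ is continuous in $s$.

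\textbf{Step 3 (convergence of the measures; the main obstacle).} Proposition~\ref{prop:2} says the distribution functions $\mu_u([0,t])$ converge to $\langle\log P\rangle_t$ uniformly on $[0,T]$. This gives a uniform bound $M$ on the masses $\mu_u([0,T])$. We split
\[
\Big|\int_0^t h_\gamma\,d\mu_u-\int_0^t h\,d\langle\log P\rangle\Big|\le \|h_\gamma-h\|_\infty M+\Big|\int_0^t h\,d\mu_u-\int_0^t h\,d\langle\log P\rangle\Big|.
\]
The first term vanishes by Step 2. For the second term, we approximate $h$ uniformly within $\varepsilon$ by a step function $\sum_k h(s_k)\mathbf 1_{(s_k,s_{k+1}]}$, using uniform continuity of $h$ on $[0,T]$. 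Against each measure, the step-function integral on $[0,t]$ is a finite combination of increments of the distribution functions. Those increments converge uniformly in $t$ by Proposition~\ref{prop:2}. The approximation error is at most $\varepsilon(M+\langle\log P\rangle_T)$. Hence the second term tends to $0$ uniformly in $t$.

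The care needed here is exactly this weak-convergence argument with uniformity in $t$, together with the a.s.\ qualifier. Both are handled by working pathwise on the full-measure event where Proposition~\ref{prop:2} holds.

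\textbf{Step 4 (identify the limit).} By It\^o's formula, $d\langle\log P\rangle=P^{-2}\,d\langle P\rangle$. Therefore
\[
\tfrac12\int_0^t g(P_s)P_s^2\,d\langle\log P\rangle_s=\int_0^t\frac{g(P_s)}{2}\,d\langle P\rangle_s,
\]
which is the claimed limit, uniformly for $t\in[0,T]$.
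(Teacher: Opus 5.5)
Your proposal is correct and follows the paper's proof essentially step for step. Both write the fee as $\frac{\gamma}{1-\gamma}\int_0^t g(P_\gamma)P_\gamma^2\,|d\log P_\gamma|$, replace the integrand by $h=g(P)P^2$ using $\|\log P_\gamma-\log P\|_\infty\le u/2$, approximate $h$ by a step function whose integrals reduce to increments controlled by Proposition~\ref{prop:2}, and conclude with $\frac{\gamma}{u(1-\gamma)}\to\frac12$ and $d\langle\log P\rangle=P^{-2}\,d\langle P\rangle$. Your one fixed step function on $[0,T]$, instead of the paper's $t$-dependent partition $I_i$, makes the uniformity in $t$ slightly cleaner, but the argument is the same.
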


\begin{proof}
Let $z=\log P_{\gamma}$, we have $\frac{dx(P_\gamma)}{dz}=x'(P_{\gamma})P_{\gamma}=g(P_{\gamma})P^2_{\gamma} \geq 0$.

According to Mean Value Theorem and equation~\ref{eq:fee}, there exists $\xi_i$ between $P_\gamma(t_i)$ and $P_\gamma(t_{i+1})$ such that
\begin{equation}
\begin{split}
    fee(t) &= \frac{\gamma}{1-\gamma} \sup_{\|\pi\| \downarrow0}\sup_\pi \sum_{i=1}^{n-1} g(\xi_i)\xi^2_i |z(t_{i+1})-z(t_i)| \\
           &=\frac{\gamma}{1-\gamma} \int_0^t g(P_\gamma(s))P^2_\gamma (s)|dz|
\end{split}
\label{eq:3}
\end{equation}

Let $h(s)=g(P(s))P^2(s)$. $h$ is bounded and uniformly continuous on $[0,t]$. Thus we can use step function to approximate $h$ uniformly. That is, $\forall \epsilon_*>0, \exists \delta>0$ such that as long as $|t_{i+1}-t_i|<\delta$, $|h(t_{i+1})-h(t_i)|<\epsilon_*$. Let $N(t)=\lfloor \frac{t}{\delta}\rfloor+1$ and $I_i=[\frac{(i-1)t}{N(t)},\frac{it}{N(t)})$. Let $\phi(s)=\sum_{i=1}^{N(t)} \mathds{1}_{s\in I_i}h(\frac{it}{N(t)})$ and $\phi(0)=h(0)$. Then we have $\|\phi-h\|_\infty<\epsilon_*$ on $[0,t]$.
Denote $J(t)=<\log P>_t$, and let $|h|\leq M(t)$ on $[0,t]$, we get the following result:
\begin{equation}
    \begin{split}
        &\left| u\int_0^t h(s)|dz| - \int_0^t h(s)d<\log P>\right|\\
        \leq &\left| u\int_0^t h(s)|dz|-u\int_0^t \phi(s)|dz|\right|+\left|u\int_0^t \phi(s)|dz|-\int_0^t\phi(s) d<\log P>\right|\\
        &+\left| \int_0^t\phi(s) d<\log P>-\int_0^t h(s)d<\log P>\right| \\ 
        \leq &\epsilon_*\int_0\!u|dz|\!+\!\sum_{i=1}^{N(t)} h(\frac{it}{N})(\left|\!\int_{I_i}u|dz|-\int_{I_i}d\!<\!\log P\!>\!\right|)+\epsilon_*\int_0^t\!d<\!\log P\!>
    \end{split}
\label{eq:4}
\end{equation}

According to proposition~\ref{prop:2}, $\forall \epsilon>0$,$\exists u_0>0$ such that $\forall u<u_0$ and $t\in(0,T]$, $\left|uTV (\log P_{\gamma},[0,t])-<\log P>_t\right|<\epsilon$. Therefore, $\int_0^tu|dz|\leq J(t)+\epsilon$. Similarly, $\left|uTV (\log P_{\gamma},[a,b])-\int_a^bd<\log P>\right|<2\epsilon$, leading to $\left|\int_{I_i}u|dz|-\int_{I_i}d<\log P>\right|\leq 2N(t)\epsilon$. Plugging these results into inequality~\ref{eq:4},

\begin{equation}
        \left|u\int_0^t h(s)|dz| - \int_0^t h(s)d<\log P>\right|<(J(t)+\epsilon)\epsilon_*+2M(t)N(t)\epsilon+J(t)\epsilon_*
\label{eq:5}
\end{equation}

Note that $g(e^z)e^{2z}$ is a bounded and uniformly continuous function of $z$ on a closed interval. Since we have $\|\log P_\gamma-\log P\|_\infty \leq\frac{u}{2}$, there exists $u_1>0$ such that as long as $u<u_1$, $|g(P_\gamma(s))P^2_\gamma (s)-h(s)|<\epsilon$ holds for all $s\in[0,t]$. Let $u<\min(u_0,u_1)$,we have \[\left|u\int_0^t g(P_\gamma(s))P^2_\gamma (s)|dz|-u\int_0^t h(s)|dz|\right|<\epsilon (J(t)+\epsilon)\]
Combining with~\ref{eq:5}, 

\begin{equation*}
\begin{split}
    &\left|u\int_0^t g(P_\gamma(s))P^2_\gamma (s)|dz| - \int_0^t h(s)d<\log P>\right|\\
    \leq & (J(t)+\epsilon)(\epsilon_*+\epsilon)+2M(t)N(t)\epsilon+J(t)\epsilon_*
    \end{split}
\end{equation*}

$J,M,N$ are non-decreasing function on $(0,T]$ and attain their maximum at $T$. Let $\epsilon_* \rightarrow0$, and then $\epsilon \rightarrow0$. We have 
\begin{equation*}
    \lim_{u \downarrow0} \left|u\int_0^t g(P_\gamma(s))P^2_\gamma (s)|dz|-\int_0^t g(P(s))P^2(s)d<\log P>\right|=0
\end{equation*}
The convergence is uniform for $\forall t \in [0,T]$.

Recall that $u=-2\log(1-\gamma)$, $\lim_{\gamma \to 0} \frac{\gamma}{u(1-\gamma)}=\frac{1}{2}$. Combining these results with equation~\ref{eq:3}, we have 
\begin{equation*}
\begin{split}
       \lim_{\gamma \to 0} fee(t) &= \lim_{\gamma \to 0} \frac{\gamma}{u(1-\gamma)}u\int_0^t g(P_\gamma(s))P^2_\gamma (s)|dz| \\
    &=\frac{1}{2}\int_{0}^{t} g(P)P^2 d<\log P> \\
    &=\int_{0}^{t} \frac{g(P)}{2} d<P> \quad \quad \quad \qed
\end{split}
\end{equation*}
\end{proof}

Theorem \ref{thm:1} shows that when the fee rate approaches zero, the accumulated trading fee is only determined by the quadratic variation of the price process, which means that for a diffusion process $dP_t=\mu dt+\sigma dW_t$, the fee is only determined by volatility $\sigma$.

Take Uniswap as a concrete example of Theorem~\ref{thm:1}.
Given range $[P_a,P_b]$ and liquidity $L$, the trading function is 
\[(x+L\sqrt{P_a})(y+\frac{L}{\sqrt{P_b}})=L^2\]
when the price $P$ is within this range. We can get
\[x(P)=L(\sqrt{P}-\sqrt{P_a}),y(P)=L(1/\sqrt{P}-1/\sqrt{P_b})\]
and 
\[g(P)=\frac{L}{2}P^{-\frac{3}{2}}\]
Applying Theorem ~\ref{thm:1} leads to the result
\[fee_t=\int_0^t\frac{L}{4}P_s^{-\frac{3}{2}}\mathds{1}_{P_s\in[P_a,P_b]}d<P>\]

Uniswap V2 is a special case of V3 when we choose $P_a=0$ and $P_b=+\infty$.
Under the assumption that the price process follows a geometric Brownian motion $dP=\mu Pdt+\sigma PdW$,
we get \[fee_t=\int_0^t\frac{L}{4}\sigma^2\sqrt{P_s}ds\]
Since the pool value $V=x+Py=2L\sqrt{P}$, the instantaneous fee ratio gets to $\frac{\sigma^2}{8}$, which is consistent with previous research \cite{evans2021optimal} and \cite{milionis2024automated}.

\section{Discussion}
\label{sec:discussion}

\subsection{LP's P\&L}

Following the definition in \cite{milionis2022automated}, LP's PnL can be decomposed to accumulated fees plus the change in pool reserve value.

\begin{equation}
\label{eq:7}
        \text{LP PnL}_{t}=\underbrace{fee_{t}}_{\text{accumulated fees}}+\underbrace{V_{t}-V_{0}}_{\substack{\text{change in pool}\\
                        \text{reserve value}
                }
        }
\end{equation}

As the fee rate $\gamma$ approaches zero, we have $\lim_{\gamma \downarrow 0}V=x+Py$, and under the assumption of a continuous semi-martingale $\{P_t\}$, we have

\begin{equation}
\label{eq:6}
\begin{split}
    \lim_{\gamma \downarrow 0}(V_t-V_0)= &\int_0^t V'(P_s)dP_s+\frac{1}{2}\int_0^t V''(P_s)d<P> \text{(from Ito's Lemma)} \\
    &=\int_0^t y(P_s)dP_s-\int_0^t\frac{g(P_s)}{2}d<P> \text{(from Section~\ref{sec:assumption})}
\end{split}
\end{equation}

Combined with Theorem ~\ref{thm:1}, we get

\begin{proposition}
\begin{equation}
    \lim_{\gamma \downarrow 0} \textup{LP PnL}_{t} = \underbrace{\int_0^ty(P_s)dP_s}_{\textup{market risk}} \quad a.s.
\end{equation}
\end{proposition}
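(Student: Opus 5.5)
The plan is to combine the decomposition in equation~\ref{eq:7} with Theorem~\ref{thm:1} and the It\^o computation in equation~\ref{eq:6}, and to check that the two $\gamma\downarrow 0$ limits can be taken term by term. By equation~\ref{eq:7}, $\text{LP PnL}_t = fee_t + V_t - V_0$. Here $V_t = x(P_\gamma(t)) + P_\gamma(t)\,y(P_\gamma(t)) = V(P_\gamma(t))$ depends on $\gamma$ through the pool price. So it suffices to show two things almost surely: $\lim_{\gamma\downarrow0} fee_t = \int_0^t \frac{g(P_s)}{2}d\langle P\rangle$, and $\lim_{\gamma\downarrow0}(V(P_\gamma(t)) - V(P_\gamma(0))) = V(P_t)-V(P_0)$. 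Adding the limits is then legitimate.

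\textbf{Step 1 (fees).} The first limit is exactly Theorem~\ref{thm:1}, which holds almost surely and uniformly in $t\in[0,T]$.

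\textbf{Step 2 (pool value).} By the defining constraint of $P_\gamma$ in equation~\ref{eq:2}, $\|\log P_\gamma - \log P\|_\infty \le u/2$ with $u=-2\log(1-\gamma)\to 0$. Hence $P_\gamma(s)\to P_s$ uniformly on $[0,t]$ along each path. The path of $P$ is continuous and positive on $[0,t]$, so it lies in a compact subset of $(0,\infty)$. Property (iii) says $V$ is $C^2$ on that set, so $V$ is uniformly continuous there. This gives $V(P_\gamma(s)) \to V(P_s)$ uniformly, and in particular at $s=0$ and $s=t$. I expect this step to be the main (though mild) obstacle. One must be careful that the paper's ``$\lim_{\gamma\downarrow0}V = x+Py$'' is read as evaluating $V$ at the CEX price. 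It is the uniform log-price bound that justifies replacing $P_\gamma$ by $P$ inside the continuous function $V$.

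\textbf{Step 3 (It\^o).} Since $P$ is a continuous semi-martingale and $V\in C^2$, It\^o's formula gives
\[V(P_t)-V(P_0)=\int_0^t V'(P_s)\,dP_s+\frac12\int_0^t V''(P_s)\,d\langle P\rangle_s.\]
Properties (iv) and (v) from Section~\ref{sec:assumption} give $V'=y$ and $V''=-g$, which reproduces equation~\ref{eq:6}.

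Adding the Step 1 limit to the Step 3 expression, the terms $\pm\int_0^t \frac{g(P_s)}{2}d\langle P\rangle$ cancel, leaving $\int_0^t y(P_s)\,dP_s$ almost surely. The exceptional null sets are those from Proposition~\ref{prop:2}/Theorem~\ref{thm:1} and from It\^o's formula. These are countably many, so the conclusion holds outside a single null set.
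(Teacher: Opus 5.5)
Your proposal is correct and follows the paper's argument. You decompose the PnL via equation~\ref{eq:7}, take the fee limit from Theorem~\ref{thm:1}, and apply It\^o's formula with $V'=y$, $V''=-g$ (equation~\ref{eq:6}), so the $\frac{g}{2}\,d\langle P\rangle$ terms cancel; your Step~2 also supplies the justification, via $\|\log P_\gamma-\log P\|_\infty\le u/2$ and uniform continuity of $V$ on a compact set, for the limit $\lim_{\gamma\downarrow0}V=x+Py$ that the paper only asserts.
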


This proposition implies that when the fee rate approaches zero and only arbitraging tradings happen, LP's PnL is only affected by the market risk. If LPs perfectly hedge the market risk at CEXs, no matter which process the price is, they can certainly(not only by expectation) make profit from non-arbitraging trading activities. 

\subsection{Process with Jumps}
In practice, AMMs are operated on blockchain networks characterized by fixed average block intervals, which impose discrete settlement intervals on AMM transactions. Consequently, even if CEX price changes continuously, AMMs price update step-wisely and exhibit 'jumpy' behavior. The longer the block time, the more pronounced these price discontinuities become. Furthermore, exogenous shocks such as breaking news and massive liquidations introduce extreme volatility, causing CEX prices to deviate sharply. To model such cases more precisely, we need to incorporate jump components into the underlying price process.

For a general semimartingale price process $\{P_t\}$, jumps are written as $\Delta P=P_t-P_{t^-}$ where $P_{t^-}$ is the left limit of $P_t$. We have the following conclusion:

\begin{proposition}
\begin{equation}
\begin{split}
    \lim_{\gamma \downarrow 0} \mathrm{\textup{LP PnL}}_{t}=\underbrace{\int_0^ty(P_s)dP_s}_{\text{market risk}} + \sum_{0\leq s \leq t} \underbrace{[V_s-V_{s^{-}}-y(P_{s^{-}})\Delta P_s]}_{\text{arbitrage loss}}\quad a.s.
\end{split}
\end{equation}
\label{proposition:4}
\end{proposition}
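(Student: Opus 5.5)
We decompose the PnL as in equation~\ref{eq:7} into the fee and the change in pool value, and treat each term separately for a càdlàg semimartingale. We write $P = P^c + P^d$, where $P^c$ has continuous paths and $P^d$ collects the jumps, with $<P>^c$ the continuous part of the quadratic variation.

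\textbf{Pool value.} As $\gamma\downarrow0$ we have $\|\log P_\gamma-\log P\|_\infty\le u/2\to0$, so $V(P_\gamma)\to V(P)$ uniformly on $[0,t]$. We then apply the Itô formula for general semimartingales to $V\in C^2$:
\[
V_t-V_0=\int_0^t y(P_{s^-})\,dP_s-\int_0^t\frac{g(P_{s^-})}{2}\,d<P>^c_s+\sum_{0\le s\le t}\bigl[V_s-V_{s^-}-y(P_{s^-})\Delta P_s\bigr].
\]
Here we use $V'=y$ and $V''=-g$ from Section~\ref{sec:assumption}. The jump sum converges absolutely a.s. because its terms are $O((\Delta P_s)^2)$ locally and $\sum(\Delta P_s)^2<\infty$.

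\textbf{Fee.} We must show that $\lim_{\gamma\to0}fee(t)=\int_0^t\frac{g(P_{s})}{2}\,d<P>^c_s$, i.e.\ that jumps contribute nothing to the fee in the limit. Two inputs are needed.

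First, we need an extension of Proposition~\ref{prop:2}: for a càdlàg semimartingale, $\lim_{c\downarrow0}c\,TV^c(X,[0,t])=<X>^c_t$ a.s. We would cite \L{}ochowski's result for general semimartingales, or argue heuristically as follows. A jump of size $\Delta X$ contributes at most $|\Delta X|$ to $TV^c$, hence at most $c\,|\Delta X|$ to $c\,TV^c$. On the finitely many large jumps this vanishes as $c\to0$. The small jumps are controlled via $\sum|\Delta X|\wedge(\Delta X)^2/c$-type bounds together with $\sum(\Delta X)^2<\infty$.

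Second, we redo the proof of Theorem~\ref{thm:1}, with two changes:
\begin{itemize}
\item $h(s)=g(P_s)P_s^2$ is now only càdlàg. We can still approximate it uniformly by step functions whose breakpoints include the large jump times, since càdlàg functions are uniform limits of step functions.
\item On a jump of $P$, the pool moves from the left limit to within $u/2$ of the new price. The fee earned there equals $\frac{\gamma}{1-\gamma}|x(P_\gamma(s))-x(P_\gamma(s^-))|=O(\gamma)\,|\Delta x|$. Summed over jumps this is bounded by $\gamma$ times a finite variation and therefore vanishes.
\end{itemize}
The mean-value representation in equation~\ref{eq:3} then yields the limit $\frac12\int h\,d<\log P>^c=\int\frac{g}{2}\,d<P>^c$.

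\textbf{Combining.} Adding the two terms, the $\pm\int\frac{g}{2}\,d<P>^c$ contributions cancel, leaving the market-risk integral plus the jump sum, which is the claimed formula.

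\textbf{Main obstacle.} The hard step is the truncated-variation limit $c\,TV^c\to<X>^c$ in the presence of infinitely many small jumps, together with uniformity in $t$. I would first prove it for processes with finitely many jumps, by splitting at the jump times and using the continuous result on each interval. I would then pass to the general case via the truncation bound above, using the localization $\sum_{|\Delta X|<\eta}(\Delta X)^2\to0$ as $\eta\to0$.
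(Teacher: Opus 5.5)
Your proposal follows the paper's proof. You split the PnL as in equation~\ref{eq:7}, obtain $\lim_{\gamma\to0} fee(t)=\int_0^t\frac{g(P)}{2}\,d[P]^{cont}$ by rerunning the proof of Theorem~\ref{thm:1} with \L{}ochowski's c\`adl\`ag result $c\,TV^c(X,[0,t])\to[X]^{cont}_t$, apply It\^o's formula with jumps to $V$ using $V'=y$ and $V''=-g$, and cancel the $\int\frac{g}{2}\,d[P]^{cont}$ terms. You are in fact more careful than the paper, for example in using the predictable integrand $y(P_{s^-})$.

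One caution concerns your claim that the fees earned at jumps are bounded by ``$\gamma$ times a finite variation''. That bound holds only for the finitely many jumps with $|\Delta\log P_s|>\eta$, or when $\sum_s|\Delta\log P_s|<\infty$, so it does not cover the general semimartingale case. The small jumps should instead be absorbed into the main term rather than shown to vanish. At such a jump both pool endpoints lie within $\eta+u/2$ of $\log P_s$. Uniform continuity of $z\mapsto g(e^z)e^{2z}$ then keeps the mean-value integrand within $\omega(\eta+u/2)$ of $h(s)$, where $\omega$ is its modulus of continuity on the relevant compact range. The limit of the total $u\,TV(\log P_\gamma,[0,t])$ then gives the result.
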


We then prove this conclusion formally. Recall equation~\ref{eq:7}, LP's PnL can be decomposed to the sum of accumulated fees $fee_t$ and value change of pool reserve $V_t-V_0$. We calculate these two parts separately.

To calculate $fee_t$, from \cite{lochowski2019quadratic} we have 
\[\lim_{c \downarrow 0} cTV^c (X,[0,t]) = [X]^{cont}_t \ a.s.\] holds for all càdlàg semimartingales $X$. Here $[X]^{cont}$ denotes the continuous part of the quadratic variation of $X$. Following the same proof in Section~\ref{sec:main}, we get the following corollary for the price process with jumps:

\begin{corollary}
For a general càdlàg semi-martingale process of CEX price $\{P_t\}$, 
\[\lim_{\gamma \to 0} fee(t) = \int_{0}^{t} \frac{g(P)}{2} d[P]^{cont} \quad a.s.\] The convergence is uniform for $\forall t \in [0,T]$.
\end{corollary}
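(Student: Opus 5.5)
We follow the proof of Theorem~\ref{thm:1} line by line. Two inputs have to be replaced: Proposition~\ref{prop:2} is swapped for its càdlàg analogue, and the continuity of $h$ is handled differently. First, Proposition~\ref{prop:1} is purely pathwise and uses nothing about continuity of $P$. It therefore still gives $TV(\log P_\gamma,[a,b])=TV^u(\log P,[a,b])$, provided $P_\gamma$ is now taken as the minimal-variation (càdlàg) tracking process. Since $P$ is a positive càdlàg semimartingale, $\log P$ is again a càdlàg semimartingale. The result of \cite{lochowski2019quadratic} then yields $\lim_{u\downarrow0}uTV(\log P_\gamma,[0,t])=[\log P]^{cont}_t$ a.s., and we will check, or take from that reference, that this convergence is uniform on $[0,T]$. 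The continuous part is unaffected by jumps in the change of variables, so $[\log P]^{cont}=\int P_{s^-}^{-2}\,d[P]^{cont}$.

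Second, the fee representation \eqref{eq:3} must be redone. Over an interval where $z=\log P_\gamma$ jumps, the mean value theorem is replaced by the exact increment $|x(P_\gamma(s))-x(P_\gamma(s^-))|=\int g(e^w)e^{2w}\,dw$ along the jump. Such a jump happens exactly when $\log P$ jumps by more than $u/2$ beyond the band. We therefore split
\[
u\,TV(\log P_\gamma)=u\int (\text{continuous part})+u\sum_{s\le t}|\Delta z_s|,
\]
and the aim is to show that the jump contribution $u\sum|\Delta z_s|$ vanishes as $u\downarrow0$. The bound $|\Delta z_s|\le|\Delta\log P_s|$ gives $u\sum|\Delta z_s|\le u\sum_{|\Delta \log P_s|>u/2}|\Delta\log P_s|$. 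This tends to zero by dominated convergence applied to the counting, because $\sum(\Delta\log P_s)^2<\infty$ and $u|\Delta|\mathds{1}_{|\Delta|>u/2}\le 2\Delta^2$ pointwise tends to $0$. The same estimate multiplied by the bounded weight $g(e^w)e^{2w}$ handles the weighted version.

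Third, the step-function approximation in \eqref{eq:4} fails as written because $h(s)=g(P_s)P_s^2$ is no longer uniformly continuous. We handle this by partitioning $[0,t]$ so that $h$ oscillates less than $\epsilon_*$ on each $I_i$ except at finitely many jump times exceeding $\epsilon_*$. This is possible because a càdlàg path on a compact interval has only finitely many such jumps and can be approximated uniformly by step functions. The exceptional jump times carry zero $d[P]^{cont}$-mass, and, by the previous paragraph, asymptotically zero $u|dz|$-mass. The estimate \eqref{eq:4}--\eqref{eq:5} then goes through with the boundedness of $h$ and $J(t)=[\log P]^{cont}_t$. Replacing $g(P_\gamma)P_\gamma^2$ by $h$ uses $\|\log P_\gamma-\log P\|_\infty\le u/2$ exactly as before. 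The factor $\gamma/(u(1-\gamma))\to1/2$ and the identity $P^2\,d[\log P]^{cont}=d[P]^{cont}$ give the stated formula.

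The main obstacle is the second step: showing rigorously that the jump parts of $P_\gamma$ contribute nothing in the limit. Closely related to it is confirming that the uniform-in-$t$ convergence from \cite{lochowski2019quadratic} holds on subintervals $I_i$ whose endpoints may sit at jump times. To deal with this, we first prove convergence at all fixed $t$ on a countable dense set. We then use the monotonicity of both sides in $t$, together with the continuity of $t\mapsto[\log P]^{cont}_t$, to upgrade to uniform convergence.
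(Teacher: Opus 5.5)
Your route is the paper's own. Its proof of this corollary is the single remark ``use \cite{lochowski2019quadratic} in place of Proposition~\ref{prop:2} and follow the proof in Section~\ref{sec:main}.'' You are right that this hides real issues: $P_\gamma$ must be c\`adl\`ag, the representation in equation~\ref{eq:3} is wrong at jumps of $z$, $h$ is only c\`adl\`ag, and the half-open $I_i$ interact with atoms of $|dz|$. Your handling of the last two is fine: a regulated step approximation of $h$, and the P\'olya-type upgrade to uniform convergence, which also gives uniform convergence of the left limits $u\,TV(z,[0,\tau^-])$.

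\textbf{The gap.} It is the bound in your second step,
\[
u\sum_{s\le t}|\Delta z_s|\le u\sum_{|\Delta\log P_s|>u/2}|\Delta\log P_s| ,
\]
which is false. $z$ jumps whenever $\log P_s$ leaves $[z_{s^-}-u/2,\,z_{s^-}+u/2]$. Since $\log P_{s^-}$ may already sit at the edge of that band, an arbitrarily small outward jump moves $z$. For example, let $z\equiv0$ while $\log P$ rises continuously from $0$ to $u/2$, and then let $\log P$ jump by $\delta<u/2$. Then $\Delta z=\delta$, but your right-hand side is $0$. Only $|\Delta z_s|\le|\Delta\log P_s|$ survives, and $\sum_s|\Delta\log P_s|$ may be infinite, so this gives no control of the total jump mass $u\sum_s|\Delta z_s|$. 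That this mass vanishes does not follow pathwise from $\sum_s(\Delta\log P_s)^2<\infty$. It would need a genuinely stochastic argument about where $\log P_{s^-}$ sits in the band at jump times. So the step you yourself call the main obstacle is not established, and the ``weighted version'' rests on the same false bound.

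\textbf{The fix.} You do not need the jump mass to vanish, only the error of the non-exact representation. Put $\psi(w)=g(e^w)e^{2w}$, so that $x=\Psi(z)$ with $\Psi'=\psi\ge0$. Let $K=[\inf_{[0,T]}\log P-\tfrac12,\ \sup_{[0,T]}\log P+\tfrac12]$, let $M=\sup_K\psi$, and let $\omega$ be the modulus of continuity of $\psi$ on $K$. Two facts give the estimate below:
\begin{itemize}
\item at each jump, $\bigl||\Psi(z_s)-\Psi(z_{s^-})|-\psi(z_s)|\Delta z_s|\bigr|\le\omega(|\Delta z_s|)\,|\Delta z_s|$;
\item $|\Delta z_s|>\eta$ forces $|\Delta\log P_s|>\eta$.
\end{itemize}
Hence for $u\le1$ and every $\eta>0$, uniformly in $t\le T$,
\[
u\Bigl|TV(x,[0,t])-\int_0^t\psi(z_s)\,|dz_s|\Bigr|\le\omega(\eta)\,u\,TV(z,[0,T])+M\sum_{s\le T,\ |\Delta\log P_s|>\eta}u\,|\Delta\log P_s| .
\]
For small $u$ the first term is at most $\omega(\eta)\bigl([\log P]^{cont}_T+1\bigr)$. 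The second is a finite sum that tends to $0$. Send $u\downarrow0$, then $\eta\downarrow0$.

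Likewise, the finitely many exceptional jump times in your third step only need $u|\Delta z_s|\le u|\Delta\log P_s|\to0$ at fixed $s$; they do not need the previous paragraph. With these replacements the rest of your argument goes through.
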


To calculate $V_t-V_0$, recall~\ref{eq:6} and apply Ito's Lemma for jump-diffusion process, we have the following equation:
\begin{equation}
\begin{split}
     \lim_{\gamma \downarrow 0}(V_t-V_0)=&\int_0^t V'(P_s)dP_s+\frac{1}{2}\int_0^t V''(P_s)d[P_s]^{cont} \\
     +&\sum_{0\leq s \leq t} [V_s-V_{s^{-}}-V'(P_{s^{-}})\Delta P_s] \\
     =&\int_0^ty(P_s)dP_s+\sum_{0\leq s \leq t} [V_s-V_{s^{-}}-y(P_{s^{-}})\Delta P_s] \\
     -&\int_0^t \frac{g(P_s)}{2}d[P_s]^{cont}
\end{split}
\end{equation}

Combining the above two parts leads to~\ref{proposition:4}.

We have a brief explanation why the second part $[V_s-V_{s^{-}}-y(P_{s^{-}})\Delta P_s]$ is called arbitrage loss in~\ref{proposition:4}. Intuitively, there is an instantaneous price gap between CEX price and the AMM pool price when jump happens. As the fee rate approaches zero, LP suffers from arbitrage loss without any fee for compensation. If we take a look at LP's PnL at each jump, assuming the CEX price changes from $P$ to $P+\Delta P$, the arbitragers can make profit by pushing the pool price from $P$ to $P+\Delta P$. The amount is calculated as follows:
\begin{equation*}
\begin{split}
 \textup{ArbLoss}=&x(P+\Delta P)-x(P)-(y(P)-y(P+\Delta P))*(P+\Delta P) \\
        =&V(P+\Delta P)-V(P)-y(P)\Delta P \\
        \leq & 0 \quad \text{(Given $V$ is concave, see Section~\ref{sec:assumption})}
\end{split}
\end{equation*}

This implies that when market risk is perfectly hedged, jumps are the main cause of LP's loss. The calculation above shows that larger price jumps, which can be caused by large block time or large price volatility, lead to more LP loss. This result is consistent with \cite{milionis2024automated}.

\section{Conclusion}
In this work, we study AMM with the presence of trading fees. We derive a closed-form formula when the fee rate approaches zero. Our work relaxes the assumption that the price process is a general semimartingale. Our result shows that when the price changes continuously, the trading fees can offset the LVR loss. However, when the price jumps, LPs suffer from the loss from arbitrage trading. These results indicate that AMM designers should take blockchain latency and price jumps into consideration. Once these two issues are addressed, designers can shift their focus to other performance metrics, such as reducing the variance of LVR, without the concern that the mechanism's design might inherently impose losses on LPs. 

\bibliographystyle{splncs04}
\bibliography{amm_fee}
\end{document}